\documentclass[a4paper,11pt]{fullverllncs}
\usepackage[left=2.5cm,top=2.5cm,right=2.5cm,centering]{geometry}
\usepackage[dvipdfmx]{graphicx}
\usepackage{amsmath,amssymb,amsfonts}
\usepackage{ascmac}
\usepackage{array}
\usepackage{algpseudocode}
\usepackage{algorithm, algpseudocode}
\usepackage{multirow}
\usepackage{arydshln}
\usepackage{hhline} 

\usepackage[misc,geometry]{ifsym}
\usepackage{color}
\usepackage{rotating}

\usepackage[hyphens]{url}
\usepackage{tcolorbox}

\usepackage{tikz} 
\usetikzlibrary{trees}

\usepackage{xcolor}
\usepackage[colorlinks]{hyperref}
\definecolor{winered}{rgb}{0.5,0,0}
\definecolor{darkblue}{rgb}{0,0,0.5}
\definecolor{darkgreen}{rgb}{0,0.3,0}
\hypersetup{
linkcolor=winered,
citecolor=darkblue,
urlcolor=darkgreen
}

\newcommand{\A}{\mathsf{A}}

\newcommand{\C}{\mathsf{C}}
\newcommand{\R}{\mathsf{R}}

\newcommand{\negl}{\mathsf{negl}}

\newcommand{\Adv}{\mathsf{Adv}}

\newcommand{\Z}{\mathbb{Z}}

\newcommand{\bbL}{\mathbb{L}}
\newcommand{\N}{\mathbb{N}}

\newcommand{\w}{\mathsf{w}}

\newcommand{\sfGame}{\mathsf{G}}

\newcommand{\pp}{\mathsf{pp}}
\newcommand{\msk}{\mathsf{msk}}
\newcommand{\sk}{\mathsf{sk}}
\newcommand{\vk}{\mathsf{vk}}
\newcommand{\msg}{\mathsf{msg}}
\newcommand{\prf}{\mathsf{prf}}
\newcommand{\sff}{\mathsf{sff}}

\newcommand{\EUFAIDCMA}{\mathsf{EUF \mathchar`- AID \mathchar`-CMA}}
\newcommand{\EUFAPCMA}{\mathsf{EUF \mathchar`- AP \mathchar`- CMA}}
\newcommand{\EUFAPATPCMA}{\mathsf{EUF \mathchar`- AP\mathchar`- ATP \mathchar`- CMA}}
\newcommand{\Sign}{\mathsf{Sign}}
\newcommand{\Punc}{\mathsf{Punc}}

\newcommand{\inscorrupt}{\mathtt{corrupt}}

\newcommand{\Corrupt}{\mathsf{Corrupt}}
\newcommand{\NCL}{\mathsf{NCL}}
\newcommand{\GPV}{\mathsf{GPV}}

\newcommand{\id}{\mathsf{id}}
\newcommand{\HIBS}{\mathsf{HIBS}}
\newcommand{\nHIBSSetup}{\mathsf{Setup}}
\newcommand{\nHIBSExtract}{\mathsf{Extract}}
\newcommand{\nHIBSSign}{\mathsf{Sign}}
\newcommand{\nHIBSVerify}{\mathsf{Verify}}
\newcommand{\HIBSSetup}{\mathsf{HIBS.Setup}}
\newcommand{\HIBSExtract}{\mathsf{HIBS.Extract}}
\newcommand{\HIBSSign}{\mathsf{HIBS.Sign}}
\newcommand{\HIBSVerify}{\mathsf{HIBS.Verify}}
\newcommand{\ppHIBS}{\mathsf{pp}^{\mathsf{HIBS}}}
\newcommand{\skHIBS}{\mathsf{sk}^{\mathsf{HIBS}}}
\newcommand{\IBS}{\mathsf{IBS}}

\newcommand{\PPS}{\mathsf{PPS}}
\newcommand{\nPPSKGen}{\mathsf{KGen}}
\newcommand{\nPPSSign}{\mathsf{Sign}}
\newcommand{\nPPSPunc}{\mathsf{Punc}}
\newcommand{\nPPSVerify}{\mathsf{Verify}}
\newcommand{\PPSKGen}{\mathsf{PPS.KGen}}
\newcommand{\PPSSign}{\mathsf{PPS.Sign}}
\newcommand{\PPSPunc}{\mathsf{PPS.Punc}}
\newcommand{\PPSVerify}{\mathsf{PPS.Verify}}
\newcommand{\skPPS}{\mathsf{sk}^{\mathsf{PSS}}}
\newcommand{\vkPPS}{\mathsf{vk}^{\mathsf{PSS}}}

\newcommand{\CS}{\mathsf{CS}}
\newcommand{\Ours}{\mathsf{Ours}}

\spnewtheorem{assumption}{Assumption}{\bfseries}{\itshape}

\begin{document}

% \title{Prefix Puncturable Signatures from \\ Hierarchical Identity-Based Signatures}
% \author{}
% \authorrunning{}
% \titlerunning{Prefix Puncturable Signatures from HIBS}

%\institute{} 
%\email{}

\title{Prefix Puncturable Signatures with \\ Smaller Signing Key from HIBS\thanks{A preliminary version of this paper will appear at the 28th International Conference on Information and Communications Security (ICICS 2026).}}
\author{Masayuki Tezuka\inst{1,2}\textsuperscript{(\Letter)} \and Keisuke Tanaka\inst{2}}
\institute{Nagoya City University, Nagoya, Japan \and Institute of Science Tokyo, Tokyo, Japan\\
\email{m.tezuka@nsc.nagoya-cu.ac.jp}}
\authorrunning{M.Tezuka et al.}
\maketitle

\pagestyle{plain}
\noindent
\makebox[\linewidth]{September 20, 2026}

\begin{abstract}
Puncturable signatures, proposed by Bellare et~al. (EUROCRYPT 2016), allow a signing key to be punctured (updated) so that it loses the ability to sign particular messages while retaining the ability to sign all others.
Halevi et~al. (ASIACRYPT 2017) introduced prefix puncturable signatures, in which the signing key can be punctured with respect to a target prefix so that it cannot sign messages whose prefixes match the target prefix.

So far, several generic constructions of prefix puncturable signature schemes have been proposed, including constructions based on identity-based signatures (IBS) (ESORICS 2022) and delegated constrained signatures (IEEE Trans. Inf. Forensics Secure. 2024). 
However, these constructions suffer from drawbacks in terms of key size. 
When the prefix space is the set of all $\ell$-bit strings, the former construction requires a signing key consisting of $2^{\ell}$ IBS signing keys.
The latter construction, when instantiated with a lattice-based delegated constrained signature scheme, 
yields a punctured signing key whose size grows quadratically with the number of puncturing operations $Q^{\Punc}$.

In this paper, we present a generic construction of prefix puncturable signatures from hierarchical identity-based signature (HIBS) schemes. 
When the prefix space is $\{0,1\}^{\ell}$ and our construction is instantiated with the lattice-based HIBS scheme $\HIBS^{\GPV}$ by R{\"{u}}ckert (PQC 2010), our construction achieves a punctured signing key size bounded by $O(\ell Q^{\Punc})$.

\keywords{Prefix puncturable signatures \and Hierarchical identity-based signatures \and Generic construction}
\end{abstract}

\section{Introduction}

\paragraph{\bf (Prefix) Puncturable Signatures.}
Puncturable signatures, proposed by Bellare, Stepanovs, and Waters~\cite{BSW16}, are a special type of signatures that allow us to puncture (update) the signing key $\sk$.
For a target message $\msg^{*}$, the signing key $\sk$ can be punctured (updated) to obtain a punctured signing key $\sk_{\msg^{*}}$.
The punctured signing key $\sk_{\msg^{*}}$ loses the ability to sign the target message $\msg^{*}$ while retaining the ability to sign all other messages.
This functionality naturally allows us to view puncturable signatures as a special case of policy-based signatures \cite{BF14}, functional signatures \cite{BGI14}, or delegatable signatures \cite{BMS16}.

Halevi, Ishai, Jain, Komargodski, and Sahai \cite{HIJKSY17} proposed puncturable signatures that allow us to puncture $\sk$ with respect to a prefix $\prf^{*}$ of a message, rather than $\msg^{*}$ itself.
A signing key $\sk_{\prf^{*}}$ punctured with respect to $\prf^{*}$ allows us to sign any message $\msg$ whose prefix is different from $\prf^{*}$.
Guan and Zhandry~\cite{GZ21} referred to this type of scheme as a prefix puncturable signature scheme.

\paragraph{\bf Application of Puncturable Signatures.}
Prefix puncturable signatures are used as a building block for non-interactive multiparty computation~\cite{HIJKSY17}, disappearing signatures~\cite{GZ21}, blockchain protocols~\cite{LXFWZ20,JLSD24}, and privacy-aware data reporting for vehicular digital twin networks (VDTNs)~\cite{WMLZL25}.
Moreover, puncturable signatures provide forward secrecy at a fine-grained level \cite{JDS22}.

Forward-secure signatures provide a protection mechanism against key exposure by periodically updating the secret key.
This ensures that even if the secret key at the current time period is compromised, an adversary cannot forge signatures from past time periods, thereby achieving forward secrecy.
However, forward-secure signatures only provide protection at the granularity of time periods and do not support fine-grained control over individual messages or messages with particular features.
Puncturable signatures can be used to realize forward-secure signatures with revocation of the signing capacity for individual messages or messages with a pattern at a specific time interval.

\paragraph{\bf Previous Works on Puncturable Signatures.}
So far, several (prefix) puncturable signature schemes have been proposed.
The first puncturable signature scheme was given by Bellare et~al.~\cite{BSW16}.
They constructed the scheme from a one-way function and indistinguishability obfuscation (iO) \cite{BGIRSVY01}.
After their seminal work, Halevi et~al.~\cite{HIJKSY17} gave a prefix puncturable signature scheme by combining a non-interactive zero-knowledge (NIZK) proof system and a statistically binding commitment scheme.
Li, Xu, Fan, Wang, and Zhang~\cite{LXFWZ19,LXFWZ20} proposed a prefix puncturable signature scheme by combining a Bloom filter \cite{Blo70} and the Chinese IBS, an identity-based signature scheme standardized in ISO/IEC 14888-3 \cite{ISOIEC}.
Their scheme supports puncturing operations on a signing key, and its security is proven under the $\tau$-strong Diffie-Hellman ($\tau$-SDH) assumption \cite{BB04} in pairing groups.

Jiang, Duong, and Susilo \cite{JDS22} revisited the prefix puncturable signature scheme of Li et~al.~\cite{LXFWZ20} and pointed out several drawbacks arising from the use of Bloom filters. 
Concretely, hash collisions may cause false-positive errors, meaning that a signer may fail to sign even for non-punctured prefixes.
Moreover, the use of Bloom filters complicates the security proof because multiple keys may correspond to a single prefix. 
Motivated by these observations, Jiang et~al. proposed a generic construction from an identity-based signature (IBS) scheme without relying on Bloom filters.

After their work, Jiang, Li, Susilo, and Duong~\cite{JLSD24} gave a generic construction from a delegated constrained signature scheme. 
Shaw and Dutta~\cite{SD23} proposed an isogeny-based puncturable signature scheme from SQISign \cite{FKLPW20}.

\subsection{Motivation}
\paragraph{\bf (Punctured) Signing Key Size.}
Several proposed schemes suffer from drawbacks.
The constructions \cite{BSW16} and \cite{HIJKSY17} rely on iO or NIZK, and the resulting schemes are impractical due to heavy computation.  
The construction \cite{LXFWZ20} is a pairing-based scheme that is vulnerable to quantum computers.  
The generic constructions \cite{JDS22} and \cite{JLSD24} allow us to obtain schemes that are resistant to quantum computers.  
However, these generic constructions have the drawback of large (punctured) secret key size.
Here, we consider the key sizes of \cite{JDS22} and \cite{JLSD24} when instantiated with lattice-based primitives.

\paragraph{\bf Key Size of Lattice-Based Instantiations.}
When the prefix space is the set of all $\ell$-bit strings, the generic construction \cite{JDS22} requires a signing key to contain $2^{\ell}$ IBS signing keys.  
A punctured signing key consists of $(2^{\ell} - Q^{\Punc})$ IBS signing keys, where $Q^{\Punc}$ denotes the number of puncturing operations applied to the signing key.  
The generic construction \cite{JLSD24} requires a signing key to contain a single signing key of a delegated constrained signature scheme.
A punctured signing key also consists of a single delegated constrained signing key.  

Jiang et~al. \cite{JLSD24} provided a comparison table between lattice-based instantiations of the puncturable signature schemes in \cite{JLSD24} and \cite{JDS22}.  
They compared the schemes obtained from the generic construction of \cite{JDS22} instantiated with an efficient IBS by Tian and Huang \cite{TH16}, and the generic construction of \cite{JLSD24} instantiated with a delegated constrained signature scheme by Tsabary \cite{Tsa17}.  
We highlight this table with respect to key size in Fig.~\ref{Fig_comparion_PPS}.

\begin{figure}[htbp]
\begin{center}
\scalebox{1}{
\begingroup
\renewcommand{\arraystretch}{1.4}
\tabcolsep = 1.5pt
\begin{tabular}{lccccc}
\hline
Scheme
&
$\sk$
&
$\sk_{\prf}$
&
$\vk$

\\
\hline
\hline
\begin{tabular}{l}
\cite{JDS22}
\end{tabular}
&
\begin{tabular}{l}
$2^{\ell} \cdot \Z_{q}^{m \times m}$
\end{tabular}
&
\begin{tabular}{l}
$(2^{\ell} - Q^{\Punc}) \cdot \Z_{q}^{m  \times n } $
\end{tabular}
&
\begin{tabular}{l}
$\Z_{q}^{n \times m}$
\end{tabular}

\\
\hline
\begin{tabular}{l}
\cite{JLSD24}
\end{tabular}
&
\begin{tabular}{l}
$\Z_{q}^{m \times m}$
\end{tabular}
&
\begin{tabular}{l}
$\Z_{q}^{(m + Q^{\Punc}  \cdot n \lceil \log{q} \rceil) \times (m + Q^{\Punc} \cdot n\lceil \log{q} \rceil)} $
\end{tabular}
&
\begin{tabular}{l}
$\Z_{q}^{n \times (m + (\ell + 1) \cdot n  \lceil \log{q} \rceil )}$
\end{tabular}

\\
\hline
\begin{tabular}{l}
$\PPS_{\Ours}$\\
$[\HIBS^{\GPV}]$
\end{tabular}
&
\begin{tabular}{l}
$\Z_{q}^{m \times m}$
\end{tabular}
&
\begin{tabular}{l}
$O(\ell Q^{\Punc}) \cdot \Z_{q}^{m \times m}$
\end{tabular}
&
\begin{tabular}{l}
$\Z_{q}^{n \times m_{1} + m_{2}} + 2 \ell \cdot \Z_{q}^{n \times  m_{2}}$
\end{tabular}

\\

\hline
\end{tabular}
\endgroup
}
\end{center}
\caption{\small Key size comparison among lattice-based puncturable signature schemes on the prefix space $\{0,1\}^{\ell}$.}
In the column ``Scheme'', \cite{JDS22} denotes the scheme instantiated the generic construction \cite{JDS22} with the IBS in~\cite{TH16}.
\cite{JLSD24} denotes the generic construction \cite{JLSD24} with the delegated constrained signature scheme by Tsabary \cite{Tsa17}.
$\PPS_{\Ours}[\HIBS^{\GPV}]$ denotes the scheme obtained from our generic construction of a puncturable signature scheme using the hierarchical identity-based signature scheme $\HIBS^{\GPV}$ by R{\"{u}}ckert \cite{Ruc10}.
The columns ``$\sk$'', ``$\sk_{\prf}$'', and $\vk$ denote a signing key, a punctured signing key, and a verification key, respectively.
The parameter $Q^{\Punc}$ denotes the number of puncturing operations applied to a signing key, and $\ell$ denotes the prefix length.
The same parameter settings for $q$, $n$, and $m = n \lceil \log{q} \rceil $ are used in these schemes.
These parameters are determined by the same lattice trapdoor generation algorithm.
The parameters $m_{1}, m_{2} > 0$ satisfy $m_{1} + m_{2} = m$.

\label{Fig_comparion_PPS}
\end{figure}

From Fig.~\ref{Fig_comparion_PPS}, we observe that the lattice-based puncturable signature scheme obtained from the generic construction~\cite{JLSD24} instantiated with Tsabary’s delegated constrained signature scheme~\cite{Tsa17} suffers from a drawback in the punctured signing key size.  
In particular, the punctured signing key size grows quadratically in the number of puncturing operations $Q^{\Punc}$.  
Such quadratic growth in $Q^{\Punc}$ is undesirable in some applications.

For example, we consider an application of a puncturable signature scheme to a forward-secure signature scheme \cite{JDS22}.  
In this application, message prefixes in the puncturable signature scheme are interpreted as time periods.  
Then, by puncturing the signing key with respect to the prefix corresponding to the current time period and updating the signing key accordingly, the punctured signing key realizes the key-update mechanism of a forward-secure signature scheme.  
If the maximum number of time periods is $T$, then $T$ puncturing operations are required.  
Consequently, it is undesirable for the punctured signing key size to grow as $\Omega(T^{2})$.

\paragraph{\bf Current Open Question.} 
To summarize the above facts, the following remains open in lattice-based prefix puncturable signatures:
\begin{quote}
{\it Is it possible to construct a lattice-based prefix-puncturable signature scheme with prefix space $\{0,1\}^{\ell}$ whose punctured signing key size grows subquadratically in $Q^{\Punc}$ and sublinearly in the size of the prefix space~$2^\ell$?}
\end{quote}

\subsection{Our Result}
\paragraph{\bf Main Result.}
In this work, we give an affirmative answer to this question.  
More precisely, we present a generic construction of a prefix puncturable signature scheme $\PPS_{\Ours}[\HIBS]$ from a hierarchical identity-based signature (HIBS) scheme $\HIBS$.  
If we instantiate our construction with a lattice-based adaptively secure HIBS scheme, we obtain a lattice-based puncturable signature scheme.

\paragraph{\bf Key Size Comparison.}
For our lattice-based instantiation, we consider the lattice-based HIBS scheme $\HIBS^{\GPV}$ \cite{Ruc10}.  
The scheme $\HIBS^{\GPV}$ is based on the signature scheme by Gentry, Peikert, and Vaikuntanathan \cite{GPV08}, and its security is proven under the short integer solution (SIS) problem in the random oracle model (ROM).  
We denote the instantiation of our scheme $\PPS_{\Ours}$ with $\HIBS^{\GPV}$ as $\PPS_{\Ours}[\HIBS^{\GPV}]$.  

The key size comparison between lattice-based instantiations of $\PPS_{\Ours}$ and previous works for prefix space $\{0,1\}^{\ell}$ is given in Fig.~\ref{Fig_comparion_PPS}.  
In our instantiation $\PPS_{\Ours}[\HIBS^{\GPV}]$, the punctured signing key size is $O(\ell Q^{\Punc})$, where $Q^{\Punc}$ is the number of puncturing operations applied to the original signing key.
When the number of puncturing operations increases, our scheme achieves a smaller punctured signing key size than the lattice-based instantiation of \cite{JLSD24}.

\paragraph{\bf Future Directions.}
Our construction is generic and can be instantiated with any adaptively secure HIBS scheme.
Constructing more efficient lattice-based HIBS schemes that are adaptively secure in the QROM and obtaining more efficient prefix-puncturable signature schemes are interesting open problems.

\subsection{How to Obtain Our Scheme}
\paragraph{\bf Starting Point.}
We explain the main idea of our construction.  
Our construction is obtained by building on the generic construction of Jiang et~al.~\cite{JDS22}.  
Their construction uses an IBS scheme.  
We briefly recall the idea of their construction.

To simplify the exposition, we consider constructing a prefix-puncturable signature scheme with prefix space $\{0,1\}^{\ell}$ from their generic construction.
In their construction, the key generation algorithm generates the public parameters and a master secret key $(\pp, \msk)$ of $\IBS$ by running the setup algorithm of $\IBS$.  
Next, for all $\prf \in \{0,1\}^{\ell}$, the algorithm generates the signing key $\sk_{\prf}$ corresponding to $\prf$ by running the key-extraction algorithm of $\IBS$.
Then, the algorithm deletes $\msk$ and returns a list $L$ of $\IBS$ secret keys as the signing key $\sk$ of the prefix puncturable signature scheme, where $\sk_{\prf}$ is stored in $L[\prf]$ for all $\prf \in \{0,1\}^{\ell}$.  
The puncturing algorithm takes a signing key $\sk = L$ and a prefix $\prf^{*}$.  
It then updates the list entry as $L[\prf^{*}] \leftarrow \bot$ and returns the updated list $L$ as the new signing key $\sk'$.  
In their approach, the signing key consists of $2^{\ell}$ secret keys of the underlying IBS scheme, which is undesirable in practice.

\paragraph{\bf Our Solution.}
To address this drawback, we use a hierarchical identity-based signature (HIBS) scheme instead of an IBS scheme. 
We consider a level-$\ell$ HIBS scheme with identity space $ID = ID_{1} \times ID_{2} \times \cdots \times ID_{\ell} = \{0, 1\}^{\ell}$ (i.e., the level $i$ identity space $ID_{i} = \{0,1\}$).
For this tree, the root is labeled with the empty string $\epsilon$.
For every node at level $i$ labeled by $x \in \{0, 1\}^{i}$, its left and right children are labeled by $x0$ and $x1$, respectively.

In our construction of a prefix puncturable signature scheme, a signing key $\sk$ is the master secret key $\msk$ of the HIBS scheme.
Let $\sk_{t}$ be the current (punctured) signing key and $D$ be a set of punctured prefixes so far. 
To puncture (revoke) a prefix $\prf$ from a signing key $\sk_{t}$, we compute a set $C$ whose elements correspond to the roots of subtrees such that:
\begin{itemize}
\item Every leaf whose label is not in $D \cup \{\prf\}$ lies in the subtree of at least one node in~$C$.
\item No leaf whose label is in $D \cup \{\prf\}$ lies in the subtree of any node in $C$.
\end{itemize}
That is, $C$ is a set of labels of roots of subtrees that cover all leaves except the leaves whose labels are contained in $D\cup\{\prf\}$.
By deriving the HIBS signing keys corresponding to labels in $C$ from $\sk_{t}$ by running the key extraction algorithm of the HIBS scheme, we obtain the punctured key $\sk_{t+1} = \{\sk_{\prf}\}_{\prf \in C}$.
This operation can be performed by running the cover set algorithm of Naor et~al. \cite{NNL01}, and the number of elements in $C$ is bounded by $O(\ell Q^{\Punc})$, where $Q^{\Punc}$ denotes the number of puncturing operations.
As a result, we obtain a generic construction whose (punctured) signing key consists of $O(\ell Q^{\Punc})$ HIBS signing keys.

\subsection{Related Works}
\paragraph{\bf (Hierarchical) Identity-Based Signatures.}
The concept of identity-based signatures (IBS) was introduced by Shamir \cite{Sha84}.
In an IBS system, a trusted authority called the Key Generation Center (KGC) generates a public parameter $\pp$ and a master secret key $\msk$.
When a signer with identity $\id$ joins the system, the KGC uses the master secret key $\msk$ to generate a signing key $\sk_{\id}$ associated with $\id$ and sends it to the signer.
Then, the signer signs a message $\msg$ with $\sk_{\id}$ and generates a signature $\sigma$.
The generated signature $\sigma$ is verified using the public parameters $\pp$ and the signer’s identity $\id$.
A strength of IBS is that it reduces the number of public keys that need to be managed and significantly simplifies key management, thereby removing the overhead associated with traditional public-key infrastructures.
On the other hand, IBS has a weakness that the KGC must generate signing keys for all users using the master secret key $\msk$, which makes it a scalability bottleneck in large-scale systems. 

Hierarchical identity-based signatures (HIBS), proposed by Gentry and Silverberg \cite{GS02}, extend IBS by organizing identities into a hierarchical structure, enabling signing keys to be delegated from higher-level entities to lower-level entities. 
In a HIBS system, an entity holding a signing key for an identity at level $k$ can derive signing keys for its descendant identities. 
This allows key generation to be delegated to lower-level entities without involving the root authority for every user.

\paragraph{\bf Security of Hierarchical Identity-Based Signatures.}
The unforgeability of (H)IBS schemes is considered under two security notions: selective-ID security and adaptive-ID security.  
In the selective-ID setting, the forger is required to declare a target identity $\id^{*}$ before a public parameter $\pp$ is given. 
The adaptive-ID setting allows the forger to choose the target identity $\id^{*}$ after seeing the public parameter $\pp$ and making signing and key corruption queries.
The adaptive-ID security is stronger than the selective-ID security.
In this work, we use an adaptive-ID secure HIBS for our construction.

In previous work, several HIBS schemes satisfying adaptive-ID security have been proposed.
Kiltz, Mityagin, Panjwani, and Raghavan \cite{KMPR05}  proposed a generic construction of an adaptive-ID secure HIBS scheme from an append-only signature scheme.
They also gave a generic construction of an append-only signature scheme from a digital signature scheme.
As a result, we can obtain an adaptive-ID secure HIBS scheme from a digital signature scheme.

R{\"u}ckert \cite{Ruc10} proposed two selective-ID secure lattice-based HIBS schemes.  
One scheme is based on the lattice-based signature scheme by Gentry et~al.~\cite{GPV08}, and its security is proven under the hardness of the SIS problem in the random oracle model (ROM).  
The other scheme is based on the lattice-based signature scheme by Cash, Hofheinz, Kiltz, and Peikert \cite{CHKP10}, and its security is proven under the hardness of the SIS problem without the ROM.  
Moreover, they also constructed adaptive-ID secure HIBS schemes by combining each lattice-based scheme with a chameleon hash function \cite{KR00}.  
For our lattice-based instantiation, we use the HIBS scheme $\HIBS^{\GPV}$, which is obtained by combining the former selective-ID secure scheme with a chameleon hash function.

\section{Preliminaries}\label{Sec_Prelimi}
In this section, we introduce notation and review the definition of a hierarchical identity-based signature scheme and its security notion.

\subsection{Notations}
We introduce the notation used throughout this paper.
Let $1^{\lambda}$ be the security parameter. 
A function $f(\lambda)$ is negligible in $\lambda$ if $f(\lambda)$ tends to $0$ faster than $\frac{1}{\lambda^c}$ for every constant $c > 0$.
We write $f(\lambda)=\negl(\lambda)$ to indicate that $f$ is negligible in $\lambda$.
For a positive integer $n$, we define a set $[n]: =\{1,\dots, n\}$.
For a finite set $S$, $s \xleftarrow{\$} S$ denotes that $s$ is chosen from $S$ uniformly at random.
For finite sets $S$ and $T$, we denote by $S \backslash T$ the set obtained by removing the elements of $T$ from $S$.
We denote the set of arbitrary-length bit strings by $\{0, 1\}^{*}$.
For strings $s$ and $t$, we denote the concatenation of these strings by $s||t$.
For a list $\bbL$, $|\bbL|$ represents the number of elements in $\bbL$.
For an algorithm $\A$, we write $y \leftarrow \A(x)$ to denote that $\A$ outputs $y$ on input~$x$.
We abbreviate probabilistic polynomial time as PPT.

\subsection{Hierarchical Identity-Based Signatures}\label{SubSect_HIBE}
We review a definition of an $\ell$-level hierarchical identity-based signature scheme and its security notion.

\begin{definition}[Hierarchical Identity-Based Signature Scheme]
An $\ell$-level hierarchical identity-based signature scheme $\HIBS$ with an identity space $ID = ID_{1} \times ID_{2} \times \dots \times ID_{\ell}$ and message space $M$ is a tuple of algorithms $(\nHIBSSetup, \nHIBSExtract, \allowbreak \nHIBSSign, \nHIBSVerify)$.
\begin{itemize}

\item $\nHIBSSetup (1^{\lambda}):$ A setup algorithm takes as an input a security parameter~$1^{\lambda}$. 
It returns a public parameter $\pp$ and a master signing key $\sk_{\epsilon}$.

\item $\nHIBSExtract (\pp, \sk_{\id_{1}||\dots||\id_{j-1}}, \id_{j}):$ A key extraction algorithm takes as an input a public parameter $\pp$, a signing key $\sk_{\id_{1}||\dots||\id_{j-1}}$ for an identity (prefix) $\id_{1}||\dots||\id_{j-1}$, and an identity $\id_{j}$ where $j \in [\ell]$. 
It returns a signing key $\sk_{\id}$ for an identity (prefix) $\id_{1}||\dots||\id_{j-1}||\id_{j}$.
(In the case of $j=1$, $\nHIBSExtract$ takes a tuple $(\pp, \sk_{\epsilon}, \id_{1})$ and outputs a signing key $\sk_{\id_{1}}$. For $j \leq \ell$, we call $\id_{1}||\dots||\id_{j-1}$ an identity prefix.)
\item $\nHIBSSign (\pp, \sk_{\id}, \id, \msg):$ A signing algorithm takes as an input a public parameter $\pp$, a signing key $\sk_{\id}$, an identity $\id$, and a message $\msg \in M$.
It returns a signature $\sigma$.

\item $\nHIBSVerify (\pp, \id = \id_{1}|| \dots || \id_{\ell}, \msg, \sigma):$ A verification algorithm takes as an input a public parameter $\pp$, an identity $\id= \id_{1}|| \dots || \id_{\ell}$, a message $\msg$, and a signature $\sigma$.
It returns a bit $b \in  \{0, 1\}$.

\end{itemize}
\end{definition}

We require $\HIBS$ to satisfy the following correctness.
\paragraph{\bf Correctness.}
An $\ell$-level hierarchical identity-based signature scheme $\HIBS =(\nHIBSSetup, \nHIBSExtract, \allowbreak \nHIBSSign, \nHIBSVerify)$ satisfies correctness if $\forall \lambda \in \N$, $\forall \id_{i} \in ID_{i}$ for $i \in [\ell]$, $(\pp,\sk_{\epsilon}) \leftarrow \nHIBSSetup (1^{\lambda})$, $\sk_{\id_{1}} \leftarrow \nHIBSExtract (\pp, \sk_{\epsilon}, \id_{1})$, $\sk_{\id_{1}||\dots||\id_{i-1}||\id_{i}} \leftarrow \nHIBSExtract (\pp, \allowbreak \sk_{\id_{1}||\dots||\id_{i-1}}, \id_{i})$ for $i \in \{2, \dots, \ell\}$, $\forall \msg \in M$, $\id = \id_{1}||\dots||\id_{\ell}$, and $\sigma \leftarrow \nHIBSSign (\pp, \sk_{\id}, \id, \msg)$, $\Pr[\nHIBSVerify (\pp, \id, \msg, \sigma)] = 1 - \negl(\lambda)$ holds, where the probability is taken over the randomness of $\nHIBSSetup$, $\nHIBSExtract$, $\nHIBSSign$, and $\nHIBSVerify$.

We review the security definition of an $\ell$-level hierarchical identity-based signature scheme.
To define the security, we introduce some notations.
For an identity prefix $\id_{1}||\dots||\id_{i}$, we define the set $I[\id_{1}||\dots||\id_{i}]$ as 
\begin{equation*}
I[\id_{1}||\dots||\id_{i}]:= 
\left\{\id_{1}||\dots||\id_{j} \middle| 
\begin{split}
&j \in \{i, \dots, \ell \}, \\
&\id_{i+1} \in ID_{i+1}, \dots, \id_{j} \in ID_{j}\\
\end{split}
\right\}.
\end{equation*}
That is, $I[\id_{1}||\dots||\id_{i}]$ denotes the set consisting of the identity $\id_{1}||\dots||\id_{i}$ itself and all its descendant identities.
\begin{definition}[EUF-AID-CMA Security \cite{KMPR05}]
Let $\HIBS =(\nHIBSSetup, \nHIBSExtract, \allowbreak \nHIBSSign, \allowbreak \nHIBSVerify)$ be an $\ell$-level hierarchical identity-based signature scheme and $\A$ be a PPT adversary.
The existential unforgeability under chosen message attacks with adaptive identity (EUF-AID-CMA) security is defined via the following EUF-AID-CMA game $\sfGame^{\EUFAIDCMA}_{\HIBS, \A}(1^{\lambda})$ between a challenger $\C$ and the adversary $\A$.

\begin{itemize}
\item {\bf Initial Setup:}
$\C$ initializes lists $\bbL^{\Sign} \leftarrow \{\}$, $\bbL^{\Corrupt} \leftarrow \{\}$, runs $(\pp,\sk_{\epsilon}) \leftarrow \nHIBSSetup (1^{\lambda})$, and sends $\pp$ to $\A$.

\item {\bf Query Phase:}
$\A$ makes the following corruption queries and signing queries polynomially many times in an arbitrary order.
\begin{itemize}
\item  {\bf Corruption query:}
For a corruption query on $\id_{1}||\dots||\id_{j}$ where $j \in [\ell]$, if there is an identity prefix $ \widetilde{\id}_{1} || \dots || \widetilde{\id}_{k} \in \bbL^{\Corrupt}$ such that $\id_{1}||\dots||\id_{j} \allowbreak \in I[\widetilde{\id}_{1} || \dots || \widetilde{\id}_{k}]$, $\C$ returns $\bot$.
Otherwise $\C$ updates $\bbL^{\Corrupt} \leftarrow \bbL^{\Corrupt} \cup \{\id_{1}||\dots||\id_{j} \}$, runs $\sk_{\id_{1}} \leftarrow \nHIBSExtract (\pp, \sk_{\epsilon}, \id_{1})$ and $\sk_{\id_{1}||\dots||\id_{i-1}||\id_{i}} \leftarrow \nHIBSExtract (\pp, \allowbreak \sk_{\id_{1}||\dots||\id_{i-1}}, \id_{i})$ for $i \in \{2, \dots, j\}$.
$\C$ returns $\sk_{\id_{1}||\dots||\id_{j}}$ to~$\A$.

\item  {\bf Signing query:}
For a signing query on $(\msg, \id = \id_{1}||\dots||\id_{\ell})$, if there is an identity prefix $ \widetilde{\id}_{1} || \dots || \widetilde{\id}_{k} \in \bbL^{\Corrupt}$ such that $\id \in I[\widetilde{\id}_{1} || \dots || \widetilde{\id}_{k}]$, $\C$ returns $\bot$.
Otherwise $\C$ updates $\bbL^{\Sign} \leftarrow \bbL^{\Sign} \cup \{(\id, \msg)\}$, runs $\sk_{\id_{1}} \leftarrow \nHIBSExtract (\pp, \sk_{\epsilon}, \id_{1})$, $\sk_{\id_{1}||\dots||\id_{i-1}||\id_{i}} \leftarrow \nHIBSExtract (\pp, \allowbreak \sk_{\id_{1}||\dots||\id_{i-1}}, \id_{i})$ for $i \in \{2, \dots, \ell \}$, $\sigma \leftarrow \nHIBSSign (\pp, \sk_{\id}, \id, \msg)$.
$\C$ returns $\sigma$ to~$\A$.
\end{itemize}

\item {\bf Finalization:}  
$\A$ finally outputs a forgery $(\id^{*}=\id^{*}_{1}||\dots||\id^{*}_{\ell} ,\msg^{*}, \sigma^{*})$ to $\C$. 
If there is an identity prefix $ \widetilde{\id}_{1} || \dots || \widetilde{\id}_{k} \in \bbL^{\Corrupt}$ such that $\id^{*} \allowbreak \in I[\widetilde{\id}_{1} || \dots || \widetilde{\id}_{k}]$, return $0$.
If $(\id^{*},\msg^{*}) \in \bbL^{\Sign}$, return $0$.
If $\nHIBSVerify (\pp, \id^{*}, \msg^{*}, \allowbreak \sigma^{*}) \allowbreak = 1$ return~$1$. 
Otherwise, return $0$.
\end{itemize}

The advantage of $\A$ is defined as $\Adv^{\EUFAIDCMA}_{\HIBS, \A}(\lambda):= \Pr[\sfGame^{\EUFAIDCMA}_{\HIBS, \A}(1^{\lambda}) \Rightarrow 1]$.
We say that $\HIBS$ satisfies the EUF-AID-CMA security if for any PPT adversary $\A$, $\Adv^{\EUFAIDCMA}_{\HIBS, \A}(\lambda)$ is $\negl(\lambda)$.
\end{definition}

\section{Prefix Puncturable Signatures}\label{Sect_PPS}
In this section, we review the definition of a prefix puncturable signature scheme and its security notion.
Then, we revisit the security notion and propose a new security model.

\subsection{Prefix Puncturable Signature Scheme}
We review a definition of a prefix puncturable signature scheme and its security notion.

\begin{definition}[Prefix Puncturable Signature Scheme]
A prefix puncturable signature scheme $\PPS$ with a message space $M = P \times S$ is a tuple of algorithms $(\nPPSKGen, \nPPSPunc, \allowbreak \nPPSSign, \nPPSVerify)$, where $P$ denotes the prefix space and $S$ denotes the suffix space.
\begin{itemize}

\item $\nPPSKGen (1^{\lambda}):$ A key generation algorithm takes as an input a security parameter~$1^{\lambda}$. 
It returns a verification key $\vk$ and a signing key $\sk_{0}$.

\item $\nPPSPunc (\vk, \sk_{i}, \prf):$ A puncturing algorithm takes as an input a current signing key $\sk_{i}$ and a prefix $\prf \in P$. 
It returns an updated punctured signing key $\sk_{i+1}$.
\item $\nPPSSign (\vk, \sk_{i}, \msg=\prf||\sff):$ A signing algorithm takes as an input a verification key, a signing key $\sk_{i}$ and a message $\msg=\prf||\sff \in M$. 
It returns a signature $\sigma$.

\item $\nPPSVerify (\vk, \msg=\prf||\sff, \sigma):$ A verification algorithm takes as an input a verification key $\vk$, a message $\msg=\prf||\sff \in M$, and a signature $\sigma$.
It returns a bit $b \in  \{0, 1\}$.

\end{itemize}
\end{definition}

We require $\PPS$ to satisfy the following correctness.
\paragraph{\bf Correctness.}
A prefix puncturable signature scheme $\PPS = (\nPPSKGen, \nPPSPunc, \allowbreak \nPPSSign, \nPPSVerify)$ satisfies correctness if $\forall \lambda \in \N$, $\forall t \in \N \cup \{0\}$, $(\vk, \sk_{0}) \leftarrow \nPPSKGen (1^{\lambda})$, $\forall \prf_{1}, \dots, \prf_{t} \in P$, $\sk_{i} \leftarrow \nPPSPunc (\vk, \sk_{i-1}, \prf_{i})$ for $i \in [t]$, $\forall \prf \in P \backslash \{ \prf_{1}, \dots, \prf_{t} \}$, $\forall \sff \in S$, $\sigma \leftarrow \nPPSSign (\vk, \sk_{t}, \msg=\prf||\sff)$, $\Pr[\nPPSVerify (\vk, \msg=\prf||\sff, \sigma)]= 1 - \negl(\lambda)$, where  the probability is taken over the randomness of $\nPPSKGen$, $ \nPPSPunc$, $\nPPSSign$, and $\nPPSVerify$.

\begin{definition}[EUF-AP-CMA Security \cite{JDS22}]\label{Def_PPS_EUF-AP-CMA}
Let $\PPS = (\nPPSKGen, \allowbreak \nPPSPunc, \allowbreak \nPPSSign, \allowbreak \nPPSVerify)$ be a prefix puncturable signature scheme and $\A$ be a PPT adversary.
The existential unforgeability under chosen message attacks with adaptive puncturing (EUF-AP-CMA) security is defined via the following EUF-AP-CMA game $\sfGame^{\EUFAPCMA}_{\PPS, \A}(1^{\lambda})$ between a challenger $\C$ and the adversary $\A$.

\begin{itemize}
\item {\bf Initial Setup:}
$\C$ initializes lists $\bbL^{\Sign} \leftarrow \{\}$, $\bbL^{\Punc} \leftarrow \{\}$ and a variable $t \leftarrow 0$. 
$\C$ runs $(\vk, \sk_{0}) \leftarrow \nPPSKGen (1^{\lambda})$ and sends $\vk$ to $\A$.

\item {\bf Query Phase:}
$\A$ makes the following puncturing queries and signing queries polynomially many times in an arbitrary order.
\begin{itemize}
\item  {\bf Puncturing Query:}
For a puncturing query on $\prf$, if $\prf \in  \bbL^{\Punc}$, $\C$ returns $\bot$.
Otherwise $\C$ updates $\bbL^{\Punc} \leftarrow \bbL^{\Punc} \cup \{\prf\}$, $t \leftarrow t + 1$, $\sk_{t} \leftarrow \nPPSPunc (\vk, \sk_{t-1}, \prf)$.

\item  {\bf Signing Query:}
For a signing query on $\msg = \prf||\sff$, if $\prf \in \bbL^{\Punc}$, $\C$ returns $\bot$. 
Otherwise $\C$ updates $\bbL^{\Sign} \leftarrow \bbL^{\Sign} \cup \{\msg\}$, runs $\sigma \leftarrow \nPPSSign (\vk, \sk_{t}, \msg)$, and returns $\sigma$ to $\A$.
\end{itemize}

\item {\bf Challenge Phase:} $\A$ outputs a target prefix $\prf^{*}$.
After outputting $\prf^{*}$, $\A$ makes the puncturing queries and signing queries as described in the query phase polynomially many times in an arbitrary order.

\item {\bf Corruption Phase:} If $\prf^{*} \in \bbL^{\Punc}$, $\C$ sends a current signing key $\sk_{t}$ to $\A$.
Otherwise, $\C$ sends $\bot$ to $\A$.

\item {\bf Finalization:}  
$\A$ finally outputs a forgery $(\msg^{*}= \prf^{*}||\sff^{*}, \sigma^{*})$ to $\C$. 
If $\prf^{*} \in \bbL^{\Punc} \land \msg^{*} \notin \bbL^{\Sign} \land \nPPSVerify (\vk, \msg^{*}, \sigma^{*}) = 1$ holds, return $1$. 
Otherwise, return $0$.
\end{itemize}
The advantage of $\A$ is defined by $\Adv^{\EUFAPCMA}_{\PPS, \A}(\lambda):= \Pr[\sfGame^{\EUFAPCMA}_{\PPS, \A}(1^{\lambda}) \Rightarrow 1]$.
We say that $\PPS$ satisfies the EUF-AP-CMA security if for any PPT adversary $\A$, $\Adv^{\EUFAPCMA}_{\PPS, \A}(\lambda)$ is $\negl(\lambda)$.
\end{definition}

\subsection{Security Revisited and Our Security Model}
Here, we recall the security definition of a prefix puncturable signature scheme given in Definition \ref{Def_PPS_EUF-AP-CMA}. 
In this definition, an adversary $\A$ in the EUF-AP-CMA game must submit a target prefix $\prf^{*}$ in the challenge phase. 
To win the EUF-AP-CMA game, $\A$ is required to output a forgery for $\prf^{*}$. 
However, requiring the adversary to specify $\prf^{*}$ before obtaining the punctured signing key seems unnatural, since this requirement prevents the adversary from choosing its target based on the information contained in the exposed signing key. 

This distinction can also be viewed as analogous to selective and adaptive identity selection in identity-based cryptography: regarding a prefix as playing a role analogous to an identity, it is natural to consider both non-adaptive and adaptive choices of the target prefix. 
Instead, it appears more natural to remove the challenge phase from the EUF-AP-CMA game and allow $\A$ to output a forgery for any prefix $\prf^{*}$ on which a puncturing query has been made in the finalization phase.

Motivated by this observation, we propose a new security notion called existential unforgeability under chosen message attacks with adaptive puncturing and adaptive target prefix (EUF-AP-ATP-CMA) security. 
Our security model allows the adversary to choose the target prefix when producing the final forgery, after observing the punctured signing key, and therefore models a more adaptive adversary.

\begin{definition}[EUF-AP-ATP-CMA Security (Our Proposal)]
Let $\PPS = (\nPPSKGen, \allowbreak \nPPSPunc, \allowbreak \nPPSSign, \nPPSVerify)$ be a prefix puncturable signature scheme and $\A$ be a PPT adversary.
The existential unforgeability under chosen message attacks with adaptive puncturing and adaptive target prefix (EUF-AP-ATP-CMA) security is defined via the following EUF-AP-ATP-CMA game $\sfGame^{\EUFAPATPCMA}_{\PPS, \A}(1^{\lambda})$ between a challenger $\C$ and the adversary $\A$. 

\begin{itemize}
\item {\bf Initial Setup:}
$\C$ initializes lists $\bbL^{\Sign} \leftarrow \{\}$, $\bbL^{\Punc} \leftarrow \{\}$, and a variable $t \leftarrow 0$.
$\C$ runs $(\vk, \sk_{0}) \leftarrow \nPPSKGen (1^{\lambda})$ and sends $\vk$ to $\A$.

\item {\bf Query Phase:}
$\A$ makes the following puncturing queries and signing queries polynomially many times in an arbitrary order.
\begin{itemize}
\item  {\bf Puncturing Query:}
For a puncturing query on $\prf$, if $\prf \in \bbL^{\Punc}$, $\C$ returns $\bot$.
Otherwise $\C$ updates $\bbL^{\Punc} \leftarrow \bbL^{\Punc} \cup \{\prf\}$, $t \leftarrow t + 1$, $\sk_{t} \leftarrow \nPPSPunc (\vk, \sk_{t-1}, \prf)$.

\item  {\bf Signing Query:}
For a signing query on $\msg = \prf||\sff$, if $\prf \in \bbL^{\Punc}$, $\C$ returns $\bot$. 
Otherwise $\C$ updates $\bbL^{\Sign} \leftarrow \bbL^{\Sign} \cup \{\msg\}$, runs $\sigma \leftarrow \nPPSSign (\vk, \sk_{t}, \msg)$, and returns $\sigma$ to $\A$.
\end{itemize}

\item {\bf  Corruption Phase:} $\A$ outputs an instruction $\inscorrupt$ to $\C$.
Then, $\C$ sends a current signing key $\sk_{t}$ to $\A$.

\item {\bf Finalization:}  
$\A$ finally outputs a forgery $(\msg^{*}= \prf^{*}||\sff^{*}, \sigma^{*})$ to $\C$. 
If $\prf^{*} \in \bbL^{\Punc} \land \msg^{*} \notin \bbL^{\Sign} \land \nPPSVerify (\vk, \msg^{*}, \sigma^{*}) = 1$ holds, return $1$.
Otherwise, return $0$.
\end{itemize}

The advantage of $\A$ is defined by $\Adv^{\EUFAPATPCMA}_{\PPS, \A}(\lambda) \allowbreak:=$ $\Pr[\sfGame^{\EUFAPATPCMA}_{\PPS, \A}(1^{\lambda}) \Rightarrow 1]$.
We say that $\PPS$ satisfies the EUF-AP-ATP-CMA security if for any PPT adversary $\A$, $\Adv^{\EUFAPATPCMA}_{\PPS, \A}(\lambda)$ is $\negl(\lambda)$.
\end{definition}

Here, we clarify the fact that the EUF-AP-ATP-CMA security implies the EUF-AP-CMA security.

\begin{theorem}\label{Th_AP-ATP_imply_AP_security}
Let $\PPS = (\nPPSKGen, \allowbreak \nPPSPunc, \allowbreak \nPPSSign, \nPPSVerify)$ be a prefix puncturable signature scheme and $\A$ be a PPT adversary against the EUF-AP-CMA security of $\PPS$.
Then, there is a PPT adversary $\R$ against the EUF-AP-ATP-CMA security of $\PPS$ that satisfies
\begin{equation*}
\Adv^{\EUFAPCMA}_{\PPS, \A}(\lambda) = \Adv^{\EUFAPATPCMA}_{\PPS, \R}(\lambda).
\end{equation*}
\end{theorem}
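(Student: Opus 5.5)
The plan is a straightforward black-box simulation. $\R$ runs $\A$ internally and forwards messages between $\A$ and its own EUF-AP-ATP-CMA challenger $\C$, adjusting only the challenge and corruption phases.

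\textbf{Setup and query phase.} When $\R$ receives $\vk$ from $\C$, it forwards $\vk$ to $\A$. Each puncturing query $\prf$ from $\A$ is forwarded to $\C$. Each signing query $\msg=\prf||\sff$ is also forwarded, and $\C$'s answer ($\sigma$ or $\bot$) is relayed back to $\A$. Since both games use the identical rules for these queries, with the same lists $\bbL^{\Punc}$, $\bbL^{\Sign}$ and counter $t$, $\C$'s internal state is exactly the state the EUF-AP-CMA challenger would have.

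\textbf{Challenge and corruption phases.} When $\A$ announces its target prefix $\prf^{*}$, $\R$ records it and keeps forwarding the subsequent queries exactly as above. When $\A$ reaches the corruption phase, $\R$ checks whether $\prf^{*}\in\bbL^{\Punc}$. Here $\R$ can maintain its own copy of $\bbL^{\Punc}$, because it sees every query. If $\prf^{*}\in\bbL^{\Punc}$, $\R$ sends $\inscorrupt$ to $\C$, receives the current $\sk_{t}$, and passes it to $\A$. Otherwise $\R$ gives $\bot$ to $\A$. In that case $\R$ may still send $\inscorrupt$ to $\C$ and discard the key, since $\A$ cannot win in this case anyway.

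\textbf{Finalization and accounting.} When $\A$ outputs $(\msg^{*}=\prf^{*}||\sff^{*},\sigma^{*})$, $\R$ outputs the same pair. The main point to check is that $\A$'s view is perfectly simulated. This holds because every answer either comes from $\C$ running the identical algorithms on the identical key sequence, or is $\bot$ exactly when the EUF-AP-CMA challenger would return $\bot$.

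The two winning conditions coincide on this output. The EUF-AP-CMA winning condition is $\prf^{*}\in\bbL^{\Punc}\land\msg^{*}\notin\bbL^{\Sign}\land\nPPSVerify(\vk,\msg^{*},\sigma^{*})=1$, where $\prf^{*}$ is the announced prefix and the forgery must be on it. The EUF-AP-ATP-CMA condition is the same predicate on the same lists. One subtlety: if $\A$ outputs a message whose prefix differs from its announced $\prf^{*}$, it loses in its own game, while $\R$ might win. To obtain exact equality, $\R$ should in that case output a trivially losing forgery, for instance $\sigma^{*}=\bot$ or a message on an unpunctured prefix. Then $\R$ wins if and only if $\A$ wins, and the two advantages are equal. $\R$ is clearly PPT, since it only adds list bookkeeping to $\A$'s running time.
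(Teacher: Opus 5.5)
Your proposal is correct and follows essentially the same route as the paper's proof. In both, $\R$ relays $\vk$ and all puncturing and signing queries to $\C$, records the announced $\prf^{*}$, passes on the key $\sk_{t}$ obtained via $\inscorrupt$ only when $\prf^{*}\in\bbL^{\Punc}$ (giving $\bot$ otherwise), and outputs $\A$'s forgery. Your extra step of discarding a forgery whose prefix differs from the announced $\prf^{*}$ (most simply by having $\R$ abort) is something the paper leaves implicit, since its notation already forces the forgery onto $\prf^{*}$; under the looser reading, that step is what turns $\Adv^{\EUFAPCMA}_{\PPS,\A}(\lambda)\le\Adv^{\EUFAPATPCMA}_{\PPS,\R}(\lambda)$ into the stated equality.
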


\begin{proof}
We prove Theorem \ref{Th_AP-ATP_imply_AP_security} by assuming the existence of a PPT adversary $\A$ that breaks the EUF-AP-CMA security of $\PPS$, and constructing a PPT adversary $\R$ that breaks the EUF-AP-ATP-CMA security of $\PPS$.
Let $\C$ be the challenger of the EUF-AP-ATP-CMA security of $\PPS$.
The construction of $\R$ is given as follows.

\begin{itemize}
\item $\R$ takes an instance $\vk$ of the EUF-AP-ATP-CMA security game.
Then, $\R$ initializes lists $\bbL^{\Sign} \leftarrow \{\}$, $\bbL^{\Punc} \leftarrow \{\}$ and invokes $\A$ on input $\vk$.

\item For a puncturing query $\prf$ from $\A$,  if $\prf \in \bbL^{\Punc}$, $\R$ returns $\bot$.
Otherwise $\R$ updates $\bbL^{\Punc} \leftarrow \bbL^{\Punc} \cup \{\prf\}$ and makes a puncturing query to~$\C$.

\item For a signing query on $\msg = \prf||\sff$, if $\prf \in \bbL^{\Punc}$, $\R$ returns $\bot$. 
Otherwise $\R$ updates $\bbL^{\Sign} \leftarrow \bbL^{\Sign} \cup \{\msg\}$, makes a signing query with $\msg = \prf||\sff$, receives a signature $\sigma$. 
Then, $\R$ returns $\sigma$ to $\A$.

\item In the challenge phase of $\A$, $\R$ receives a target prefix $\prf^{*}$ and stores it.
For puncturing queries and signing queries after receiving $\prf^{*}$, $\R$ responds to these queries the same way as described above.

\item In the corruption phase of $\A$, if $\prf^{*} \in \bbL^{\Punc}$, $\R$ sends an instruction $\inscorrupt$ to $\C$, receives $\sk_{t}$, and returns $\sk_{t}$ to $\A$.
Otherwise, $\R$ sends $\bot$ to $\A$.
%Otherwise, $\R$ sends an instruction $\ins =  \insskip$ to $\C$.

\item After receiving the final output $(\msg^{*}= \prf^{*}||\sff^{*}, \sigma^{*})$ from $\A$, $\R$ outputs $(\msg^{*}= \prf^{*}||\sff^{*}, \sigma^{*})$ as the final output.
\end{itemize}

It is clear that $\R$ perfectly simulates the challenger of the EUF-AP-CMA game.
If $\A$ outputs a valid forgery $(\msg^{*}= \prf^{*}||\sff^{*}, \sigma^{*})$ for the EUF-AP-CMA game, then $\R$ also outputs a valid forgery for the EUF-AP-ATP-CMA game.
Thus, we conclude Theorem \ref{Th_AP-ATP_imply_AP_security}. \qed
\end{proof}

We prove the reverse implication (i.e., the EUF-AP-CMA security implies the EUF-AP-ATP-CMA security).
Note that we prove this claim via a non-tight reduction.

\begin{theorem}\label{Th_AP_imply_AP-ATP_non-tight_security}
Let $\PPS = (\nPPSKGen, \allowbreak \nPPSPunc, \allowbreak \nPPSSign, \nPPSVerify)$ be a prefix puncturable signature scheme and $\A$ be a PPT adversary against the EUF-AP-ATP-CMA security of $\PPS$ that makes $Q^{\Punc}$ puncturing queries.
Then, there is a PPT adversary $\R$ against the EUF-AP-CMA security of $\PPS$ that satisfies
\begin{equation*}
\Adv^{\EUFAPATPCMA}_{\PPS, \A}(\lambda) \leq Q^{\Punc}  \cdot  \Adv^{\EUFAPCMA}_{\PPS, \R}(\lambda).
\end{equation*}
\end{theorem}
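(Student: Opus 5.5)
We prove Theorem~\ref{Th_AP_imply_AP-ATP_non-tight_security} by a guessing reduction. Given a PPT adversary $\A$ against the EUF-AP-ATP-CMA security that makes at most $Q^{\Punc}$ puncturing queries, we construct $\R$ against the EUF-AP-CMA security as follows. $\R$ samples $i^{*} \xleftarrow{\$} [Q^{\Punc}]$, receives $\vk$ from its challenger $\C$, and runs $\A$ on $\vk$, keeping its own lists $\bbL^{\Sign}$ and $\bbL^{\Punc}$.

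The simulation handles each query type as follows.
\begin{itemize}
\item Signing queries of $\A$ are forwarded to $\C$ verbatim, and the answers (including $\bot$) are returned to $\A$.
\item Puncturing queries of $\A$ are forwarded verbatim, with one addition. When $\A$ issues its $i^{*}$-th distinct (non-$\bot$) puncturing query on some prefix $\prf$, $\R$ first enters the challenge phase of the EUF-AP-CMA game, outputting $\prf$ as the target prefix $\prf^{*}$. It then forwards the puncturing query on $\prf$ to $\C$. This is legal because the EUF-AP-CMA game lets queries continue after the challenge phase in an arbitrary order.
\item When $\A$ issues $\inscorrupt$, $\R$ proceeds to the corruption phase of its own game. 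Since $\prf^{*} \in \bbL^{\Punc}$ by construction, $\C$ returns the current $\sk_{t}$, which $\R$ hands to $\A$.
\item If $\A$ has made fewer than $i^{*}$ puncturing queries by that point, $\R$ aborts.
\item Finally, $\R$ outputs $\A$'s forgery $(\msg^{*}=\prf^{*}_{\A}||\sff^{*},\sigma^{*})$ unchanged.
\end{itemize}

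When $\R$ does not abort, the view of $\A$ is identical to the real EUF-AP-ATP-CMA game. The punctured key sequence is produced by $\C$ running $\nPPSPunc$ on exactly $\A$'s puncture sequence, and the signing answers depend only on $\sk_{t}$ and $\bbL^{\Punc}$, which coincide in the two games. Inserting the challenge phase is invisible to $\A$. Crucially, the view of $\A$ is independent of $i^{*}$ up to the moment the $i^{*}$-th puncture is issued, and the target announcement is never revealed to $\A$. Hence $i^{*}$ remains independent of $\A$'s entire transcript, including which punctured prefix it eventually forges on.

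Now suppose $\A$ wins. Its forgery prefix $\prf^{*}_{\A}$ lies in $\bbL^{\Punc}$, so it equals the $j$-th punctured prefix for some unique $j \in [Q^{\Punc}]$ (uniqueness holds because repeated punctures are rejected). Moreover, $\msg^{*} \notin \bbL^{\Sign}$ and the signature verifies. If $j = i^{*}$, then $\R$'s output is a valid EUF-AP-CMA forgery for its declared target $\prf^{*}$, since $\R$'s lists coincide with $\C$'s. The event $j = i^{*}$ also implies that $\R$ did not abort. By the independence above, $\Pr[j = i^{*} \mid \A \text{ wins}] = 1/Q^{\Punc}$. This gives $\Adv^{\EUFAPCMA}_{\PPS,\R}(\lambda) \ge \Adv^{\EUFAPATPCMA}_{\PPS,\A}(\lambda)/Q^{\Punc}$, which is the claimed inequality.

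The main subtlety is the independence argument. The abort condition and the timing of the challenge announcement depend on $i^{*}$, so we must check that neither changes $\A$'s view before its forgery is fixed. This holds because $\R$'s interaction with $\A$ is unchanged by $i^{*}$: the challenge message goes only to $\C$ and triggers no visible effect. An abort can only happen in runs where $j \neq i^{*}$, so it does not reduce the success probability. It also suffices to treat $Q^{\Punc}$ as an upper bound on the number of puncturing queries and to sample $i^{*}$ from that range.
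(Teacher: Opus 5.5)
Your reduction is correct and uses the same guessing argument as the paper, with the same $1/Q^{\Punc}$ loss. The only difference is when the guess is made. The paper defers it to $\A$'s corruption phase, drawing $\widetilde{\prf}^{*}$ uniformly from the already-final $\bbL^{\Punc}$ and announcing it as the challenge just before corruption. That removes the need for an a priori bound on $Q^{\Punc}$, the abort branch, and most of your independence bookkeeping. Your online choice at the $i^{*}$-th puncture would also work in a stricter variant of the game where the target must be announced before it is punctured.
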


\begin{proof}
We prove Theorem \ref{Th_AP_imply_AP-ATP_non-tight_security} by assuming the existence of a PPT adversary $\A$ that breaks the EUF-AP-ATP-CMA security of $\PPS$, and constructing a PPT adversary $\R$ that breaks the EUF-AP-CMA security of $\PPS$.
Let $\C$ be the challenger of the EUF-AP-CMA security of $\PPS$.
The construction of $\R$ is given as follows.

\begin{itemize}
\item $\R$ takes an instance $\vk$ of the EUF-AP-CMA security game.
$\R$ initializes lists $\bbL^{\Sign} \leftarrow \{\}$, $\bbL^{\Punc} \leftarrow \{\}$. 
Then, $\R$ invokes $\A$ on input $\vk$.

\item For a puncturing query $\prf$ from $\A$,  if $\prf \in \bbL^{\Punc}$, $\R$ returns $\bot$.
Otherwise $\R$ updates $\bbL^{\Punc} \leftarrow \bbL^{\Punc} \cup \{\prf\}$ and makes a puncturing query to~$\C$.

\item For a signing query on $\msg = \prf||\sff$, if $\prf \in \bbL^{\Punc}$, $\R$ returns $\bot$. 
Otherwise $\R$ updates $\bbL^{\Sign} \leftarrow \bbL^{\Sign} \cup \{\msg\}$, makes a signing query with $\msg = \prf||\sff$, receives a signature $\sigma$. 
Then, $\R$ returns $\sigma$ to $\A$.

\item  In the corruption phase of $\A$, if $\R$ receives $\inscorrupt$, $\R$ chooses a guessed target prefix $\widetilde{\prf}^{*} \xleftarrow{\$} \bbL^{\Punc}$, moves to the challenge phase of $\R$, sends $\widetilde{\prf}^{*}$ to $\C$, moves to the corruption phase of $\R$, receives $\sk_{t}$, and returns $\sk_{t}$ to $\A$.

\item $\R$ receives the final output $(\msg^{*}= \prf^{*}||\sff^{*}, \sigma^{*})$ from $\A$. 
If $\widetilde{\prf}^{*}  = \prf^{*}$ holds, $\R$ returns $(\msg^{*}= \prf^{*}||\sff^{*}, \sigma^{*})$ to $\C$.
Otherwise, $\R$ aborts.

\end{itemize}

It is clear that $\R$ perfectly simulates the challenger of the EUF-AP-ATP-CMA game.
If $\A$ outputs a valid forgery $(\msg^{*}= \prf^{*}||\sff^{*}, \sigma^{*})$ for the EUF-AP-ATP-CMA game, then $\R$ outputs a valid forgery if $\widetilde{\prf}^{*} = \prf^{*}$ holds.
The probability that $\widetilde{\prf}^{*} = \prf^{*}$ holds is at least $\frac{1}{Q^{\Punc}}$.

From this fact, we see that
\begin{equation*}
\Pr[\sfGame^{\EUFAPCMA}_{\PPS, \R}(1^{\lambda}) \Rightarrow 1] \geq  \frac{1}{Q^{\Punc}} \cdot \Pr[\sfGame^{\EUFAPATPCMA}_{\PPS, \A}(1^{\lambda}) \Rightarrow 1]
\end{equation*}
holds.
Thus, we conclude Theorem~\ref{Th_AP_imply_AP-ATP_non-tight_security}. \qed
\end{proof}

By Theorem~\ref{Th_AP-ATP_imply_AP_security} and Theorem~\ref{Th_AP_imply_AP-ATP_non-tight_security}, the notions of EUF-AP-ATP-CMA security and EUF-AP-CMA security are equivalent in the sense that reductions exist in both directions. 
However, this equivalence does not appear to be tight, since the reduction incurs a loss factor of $Q^{\Punc}$, where $Q^{\Punc}$ denotes the number of puncturing operations applied to the secret key.

\section{Prefix Puncturable Signatures from HIBS}\label{Sect_PPS_from_HIBS}
In this section, we review the complete subtree algorithm $\CS$ which is used for our construction.
Then, we present a generic construction of a prefix puncturable signature scheme $\PPS_{\Ours}$ from a hierarchical identity-based signature scheme.

\subsection{Complete Subtree Algorithm}\label{SubSect_Comp_Subtree_Algo}
We use the complete subtree algorithm $\CS$ \cite{NNL01} for our construction.
We consider a complete binary tree $T$ of level $\ell$ whose nodes are labeled with binary strings.
The root (i.e., the node of level $0$) is labeled with the empty string $\epsilon$.
For every node at level $i$ labeled by $x \in \{0, 1\}^{i}$, its left and right children are labeled by $x0$ and $x1$, respectively.

The deterministic algorithm $\CS$ takes as input a set $D$ of leaf labels and outputs a set $C$ of tree node labels such that:
\begin{itemize}
\item Every leaf whose label is not in $D$ lies in the subtree of at least one node in~$C$.
\item No leaf whose label is in $D$ lies in the subtree of any node in $C$.
\end{itemize}
We illustrate an example of the input and output of this algorithm in Fig. \ref{Fig_Example_CS}.

\begin{figure}[htbp]
\centering
\begin{tikzpicture}[
    level distance=14mm,
    level 1/.style={sibling distance=48mm},
    level 2/.style={sibling distance=24mm},
    level 3/.style={sibling distance=12mm},
    internal/.style={
        circle,
        draw,
        minimum size=7mm,
        inner sep=0pt
    },
    leaf/.style={
        circle,
        draw,
        minimum size=7mm,
        inner sep=1pt
    },
    cover/.style={
        fill=red!40
    },
    excluded/.style={
        fill=blue!25
    }
]

\node[internal] {$\epsilon$}
child {
    node[internal,cover] {$0$}
    child {
        node[internal] {$00$}
        child { node[leaf] {$000$} }
        child { node[leaf] {$001$} }
    }
    child {
        node[internal] {$01$}
        child { node[leaf] {$010$} }
        child { node[leaf] {$011$} }
    }
}
child {
    node[internal] {1}
    child {
        node[internal] {10}
        child { node[leaf,excluded] {$100$} }
        child { node[leaf,cover] {$101$} }
    }
    child {
        node[internal] {11}
        child { node[leaf,cover] {$110$} }
        child { node[leaf,excluded] {$111$} }
    }
};

\end{tikzpicture}
\caption{\small An example of the input and output of $\CS$ for a complete binary tree. 
For example, suppose that $\ell = 3$ and $\CS$ takes \colorbox{blue!25}{$D = \{100, 111\}$} as input, corresponding to the leaves highlighted in blue in the figure. 
Then, $\CS$ outputs \colorbox{red!40}{$C = \{0, 101, 110\}$}, corresponding to the nodes highlighted in red.}
\label{Fig_Example_CS}
\end{figure}

For a complete binary tree $T$ of level $\ell$ and any set $D$ of leaf labels, there exists a collection of $O(\ell |D|)$ subtrees that covers all leaves outside $D$ while excluding all leaves in $D$.
We also use the following property of the complete subtree algorithm $\CS$.
For $D\subseteq D'$, for every node $w' \in \CS(D')$, 
there exists a node $w \in \CS(D)$ such that either $w=w'$ or $w'$ is a descendant of $w$.

\subsection{Our Construction}
Let $\HIBS = (\HIBSSetup, \HIBSExtract, \allowbreak \HIBSSign, \HIBSVerify)$ be an $\ell$-level hierarchical identity-based signature scheme with an identity space $ID^{\HIBS} = ID_{1} \times ID_{2} \times \dots \times ID_{\ell}$ and message space $M^{\HIBS} = \{0, 1\}^{m}$.
To simplify the discussion, we assume that the identity space at each level $ID_{i}$ is $\{0,1\}$ (i.e., $ID^{\HIBS} = \{0, 1\}^{\ell}$).

Let $\CS$ be the complete subtree algorithm for a level $\ell$ binary tree.
%Moreover, we introduce the the non-covered-leaf algorithm $\NCL$ for our construction.
Let $T$ be a complete binary tree of level $\ell$ whose nodes are labeled with binary strings in the same way as described in Section~\ref{SubSect_Comp_Subtree_Algo}.
We refer to the following algorithm as the non-covered-leaf algorithm $\NCL$.
The algorithm $\NCL$ takes as input a set $C$ of node labels and outputs a set $D$ of leaf labels such that no leaf in $D$ belongs to any subtree rooted at a node in $C$.
We illustrate an example of the input and output of this algorithm in Fig. \ref{Fig_Example_NCL}.

\begin{figure}[htbp]
\centering
\begin{tikzpicture}[
    level distance=14mm,
    level 1/.style={sibling distance=48mm},
    level 2/.style={sibling distance=24mm},
    level 3/.style={sibling distance=12mm},
    internal/.style={
        circle,
        draw,
        minimum size=7mm,
        inner sep=0pt
    },
    leaf/.style={
        circle,
        draw,
        minimum size=7mm,
        inner sep=1pt
    },
    cover/.style={
        fill=red!40
    },
    excluded/.style={
        fill=blue!25
    }
]

\node[internal] {$\epsilon$}
child {
    node[internal] {$0$}
    child {
        node[internal,cover] {$00$}
        child { node[leaf] {$000$} }
        child { node[leaf] {$001$} }
    }
    child {
        node[internal] {$01$}
        child { node[leaf,excluded] {$010$} }
        child { node[leaf,excluded] {$011$} }
    }
}
child {
    node[internal] {1}
    child {
        node[internal] {10}
        child { node[leaf,cover] {$100$} }
        child { node[leaf,excluded] {$101$} }
    }
    child {
        node[internal,cover] {11}
        child { node[leaf] {$110$} }
        child { node[leaf] {$111$} }
    }
};

\end{tikzpicture}
\caption{\small An example of the input and output of $\NCL$. For example, suppose that $\ell = 3$ and $\NCL$ takes \colorbox{red!40}{$C = \{00, 11, 100\}$} as input, where $C$ corresponds to the nodes highlighted in red in the figure.
Then, $\NCL$ outputs \colorbox{blue!25}{$D = \{010, 011,101 \}$}, where $D$ corresponds to the leaves highlighted in blue.}
\label{Fig_Example_NCL}
\end{figure}

Now, we are ready to present our construction of a prefix puncturable signature scheme.
We give our construction $\PPS_{\Ours}[\HIBS]$ with a message space $M^{\PPS} =  P^{\PPS} \times S^{\PPS} = ID^{\HIBS} \times M^{\HIBS}$ in Fig. \ref{Fig_Our_Const}.

\begin{figure}[htbp]
\centering
\setlength{\extrarowheight}{2pt}
\begin{tabular}{|l|}
\hline
$\PPSKGen(1^{\lambda}):$\\
~~~$(\ppHIBS, \skHIBS_{\epsilon}) \leftarrow \HIBSSetup (1^{\lambda})$, $C \leftarrow \{\epsilon\}$. \\
~~~Return $(\vkPPS, \skPPS_{0}) \leftarrow (\ppHIBS, ((\skHIBS_{\epsilon}), C))$.\\
\hline
$\PPSPunc(\vkPPS, \skPPS_{i} = ((\skHIBS_{\w})_{\w \in C}, C), \prf_{i} \in \{0, 1\}^{\ell}):$\\
~~~$D \leftarrow \NCL(C)$, $D' \leftarrow D \cup \{\prf_{i}\}$, $C' \leftarrow \CS(D')$.\\
~~~Compute $(\skHIBS_{\w})_{\w \in C'}$ from $(\skHIBS_{\w})_{\w \in C}$ by running $ \HIBSExtract$.\\
~~~Return $\skPPS_{i+1} \leftarrow ((\skHIBS_{\w})_{\w \in C'}, C')$.\\
\hline
$\PPSSign(\vkPPS, \skPPS_{i}=((\skHIBS_{\w})_{\w \in C}, C), \msg=\prf||\sff):$\\
~~~$D \leftarrow \NCL(C)$.\\
~~~If $\prf \in \NCL(C)$, return $\bot$.\\
~~~Find $\w' \in \CS(D)$ such that $\prf \in I[\w']$,\\
~~~~~ where $I[\w']$ is a set of strings that have $\w'$ as a prefix. \\
~~~Derive $\skHIBS_{\prf}$ from $\skHIBS_{\w'}$ by running $\HIBSExtract$.\\
~~~Return $\sigma \leftarrow \HIBSSign (\ppHIBS, \skHIBS_{\prf}, \prf, \sff)$.\\
\hline
$\PPSVerify(\vkPPS = \ppHIBS, \msg=\prf||\sff, \sigma):$\\
~~~If $\HIBSVerify(\ppHIBS, \prf, \sff, \sigma) = 1$, return $1$.\\
~~~Otherwise, return $0$.\\
\hline
\end{tabular}
\caption{\small Our construction $\PPS_{Ours}[\HIBS]$.}
\label{Fig_Our_Const}
\end{figure}

\subsection{Analysis}

We analyze our construction $\PPS_{\Ours}[\HIBS]$.
The correctness of $\PPS_{\Ours}[\HIBS]$ follows directly from that of $\HIBS$.
The security of our construction is proven under the adaptive-ID security of $\HIBS$.

\begin{theorem}\label{Th_HIBS_imply_PPS}
If the scheme $\HIBS$ satisfies the EUF-AID-CMA security, then our construction $\PPS_{\Ours}[\HIBS]$ satisfies the EUF-AP-ATP-CMA security.

More precisely,  let $\A$ be a PPT adversary against the EUF-AP-ATP-CMA security of $\PPS_{\Ours}[\HIBS]$.
Then, there is a PPT reduction algorithm $\R$ against the EUF-AID-CMA security of $\HIBS$ that satisfies
\begin{equation*}
\Adv^{\EUFAPATPCMA}_{\PPS_{\Ours}, \A}(\lambda) = \Adv^{\EUFAIDCMA}_{\HIBS, \R}(\lambda).
\end{equation*}

\end{theorem}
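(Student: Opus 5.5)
We construct a reduction $\R$ that plays the EUF-AID-CMA game of $\HIBS$ and perfectly simulates the EUF-AP-ATP-CMA game of $\PPS_{\Ours}[\HIBS]$ for $\A$. The point is that, unlike Theorem~\ref{Th_AP_imply_AP-ATP_non-tight_security}, no guess of the target prefix is needed. $\R$ never needs to produce a punctured key until the corruption phase, and at that point the key it must hand out consists only of HIBS keys for nodes in $\CS(\bbL^{\Punc})$. These keys can be obtained through corruption queries, and none of them covers any punctured prefix.

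\textbf{Simulation.} On receiving $\ppHIBS$, $\R$ forwards $\vkPPS=\ppHIBS$ to $\A$ and keeps its own list $\bbL^{\Punc}$, which is initially empty.
\begin{itemize}
\item For a puncturing query on a fresh $\prf$, $\R$ only records $\bbL^{\Punc}\leftarrow\bbL^{\Punc}\cup\{\prf\}$.
\item For a signing query on $\msg=\prf||\sff$ with $\prf\notin\bbL^{\Punc}$, $\R$ makes an HIBS signing query on $(\sff,\prf)$ and returns the answer.
\item In the corruption phase, $\R$ computes $C\leftarrow\CS(\bbL^{\Punc})$, issues an HIBS corruption query on each $\w\in C$, and returns $((\skHIBS_{\w})_{\w\in C},C)$.
\item When $\A$ outputs $(\prf^{*}||\sff^{*},\sigma^{*})$, $\R$ outputs $(\prf^{*},\sff^{*},\sigma^{*})$.
\end{itemize}

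\textbf{Legality of the queries.} We must check that the HIBS challenger never answers $\bot$. The nodes of $C$ are pairwise incomparable; otherwise the deterministic cover could be reduced, and in any case we may take $\CS$ to output an antichain, as the standard NNL cover does. Hence no corruption query lies in $I[\cdot]$ of an earlier one. Every signing query precedes corruption, so at that time $\bbL^{\Corrupt}$ is empty. Note also that the HIBS game has no restriction tying signing queries to later corruptions.

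\textbf{Distribution.} This is the main subtle step. In the real game, the key $\skPPS_{t}$ is produced by iterated $\PPSPunc$ calls. Each call derives the keys for $\CS(D')$ from the keys for the previous cover $C$ via $\HIBSExtract$, using the stated nesting property of $\CS$. Also $\NCL(C)=D$ at every step, so the current cover is exactly $\CS(\bbL^{\Punc})$. The real signing key is therefore a set of HIBS keys for the nodes $\CS(\bbL^{\Punc})$, each obtained by a chain of $\HIBSExtract$ calls from $\skHIBS_{\epsilon}$. Because the intermediate keys are never revealed, these keys are distributed exactly as the keys the challenger produces for corruption queries, which also use the full extraction chain from the root. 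The same holds for signatures: real signing derives $\skHIBS_{\prf}$ from a cover node by extraction and then signs, which matches the challenger's chain. I expect to spell this out carefully. If the HIBS keys' distribution depended on the extraction path beyond the level structure, I would argue level by level that a chain $\epsilon\to\w\to\w'$ equals the direct chain used by the challenger, since both are the same sequence of $\HIBSExtract$ calls on the same label prefixes. One caveat: keys for different cover nodes share ancestor keys in the real game, whereas the challenger extracts each independently. I would handle this either by assuming $\HIBSExtract$ is invoked freshly along each path in both cases, or by noting that the definition's correctness and security only see the final keys. This is the step I would treat as the main obstacle.

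\textbf{Winning condition.} If $\A$ wins, then $\prf^{*}\in\bbL^{\Punc}$. By the second $\CS$ property, $\prf^{*}\notin I[\w]$ for every $\w\in\bbL^{\Corrupt}=C$. Next, $\msg^{*}\notin\bbL^{\Sign}$ gives $(\prf^{*},\sff^{*})\notin\bbL^{\Sign}$ of the HIBS game; in fact no signing query was made on $\prf^{*}$ after it was punctured, and a query made before puncturing would correspond to a different $\msg$ from $\msg^{*}$. Finally, verification is identical in the two schemes. Hence $\R$ wins exactly when $\A$ wins, which gives equality of the advantages.
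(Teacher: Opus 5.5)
\textbf{Verdict: same reduction as the paper, but the distribution step you flag is a genuine gap, and neither of your patches closes it.}

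Your reduction $\R$ is exactly the paper's: record punctures, forward signing queries, corrupt every node of $\CS(\bbL^{\Punc})$ at the corruption phase, and forward the forgery. Your winning-condition argument is fine. Your legality check (the cover is an antichain, and no signing query follows corruption) is also fine, and the paper does not even make it.

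\textbf{Why the patches fail.} Marginally, a key obtained by a chain $\epsilon\to\w\to\w'$ is distributed like the challenger's fresh chain. But $\A$ sees all final cover keys \emph{jointly} with all earlier signatures, and in $\PPS_{\Ours}$ these are correlated.
\begin{itemize}
\item Once $\skHIBS_{\epsilon}$ has been replaced by cover keys, every later cover key and every signing key below a cover node $\w$ can only be derived from the single stored $\skHIBS_{\w}$. So the real scheme cannot ``invoke $\HIBSExtract$ freshly along each path''.
\item ``Only the final keys are seen'' does not help either, because the correlation through the unrevealed $\skHIBS_{\w}$ survives in what is revealed.
\end{itemize}
The EUF-AID-CMA challenger, by contrast, runs an independent chain from $\skHIBS_{\epsilon}$ for every query.

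\textbf{A counterexample.} For a general EUF-AID-CMA secure $\HIBS$ this step really fails. Modify a secure scheme so that level-$1$ extraction also samples a random pad $k$ that all descendant keys copy, and signatures also contain $k\oplus(\text{underlying leaf key})$. In the HIBS game every query uses a fresh chain, so the extra components are uniformly random, and the modified scheme stays EUF-AID-CMA secure. In $\PPS_{\Ours}$ with $\ell\geq 2$, however, $\A$ can do the following:
\begin{enumerate}
\item puncture $0^{\ell}$;
\item ask for a signature on $1^{\ell}||\sff$, which is derived from the stored key of node $1$;
\item puncture $1^{\ell}$ and corrupt.
\end{enumerate}
The cover key for node $10$ carries the same $k$. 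This unmasks a leaf key for $1^{\ell}$ and gives a forgery on $1^{\ell}||\sff^{*}$ with $\sff^{*}\neq\sff$.

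\textbf{What is needed.} The paper's one-line ``Clearly, $\R$ perfectly simulates'' has the same hole, so the statement needs an extra hypothesis on $\HIBS$. A sufficient one is history-independent delegation: for every honestly derived key $\skHIBS_{\w}$ for $\w$, $\HIBSExtract(\ppHIBS,\skHIBS_{\w},b)$ is distributed exactly like a fresh extraction chain from $\skHIBS_{\epsilon}$ to $\w b$, regardless of which $\skHIBS_{\w}$ was used. Under this hypothesis, argue by induction over the sequence of keys computed in the real game: each one is independent of everything computed earlier and has the fresh-chain distribution. Then $\A$'s view (signatures and final cover keys) coincides with the one $\R$ produces, and your equality of advantages follows.
\begin{itemize}
\item Deterministic extraction satisfies this trivially.
\item Schemes whose delegation re-randomizes the key (e.g.\ lattice basis delegation followed by basis randomization) satisfy it only statistically. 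In that case the equality holds only up to an additive negligible term.
\item Alternatively, reduce to a delegation-aware HIBS game in which the challenger stores and reuses delegated keys.
\end{itemize}

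\textbf{Minor point, shared with the paper.} If $\bbL^{\Punc}=\emptyset$ at corruption time, then $\CS(\emptyset)=\{\epsilon\}$ would require corrupting $\epsilon$, which the HIBS game does not allow. Since $\A$ cannot win in that case, $\R$ may instead corrupt $0$ and $1$, so that both success probabilities are $0$ on this event.
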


\begin{proof}
Let $\A$ be a PPT adversary against the EUF-AP-ATP-CMA security of the scheme $\PPS_{\Ours}[\HIBS]$ and $\C$ be the challenger of the EUF-AID-CMA security game for $\HIBS$.
We prove Theorem \ref{Th_HIBS_imply_PPS} by constructing a reduction algorithm $\R$ against the EUF-AID-CMA security of the scheme $\HIBS$.
We describe the construction of $\R$ as follows.

\begin{itemize}
\item $\R$ takes an instance $\ppHIBS$ of the EUF-AID-CMA security game.
$\R$ sets $\bbL^{\Sign} \leftarrow \{\}$, $D \leftarrow \{\}$, $t \leftarrow 0$, $\vkPPS \leftarrow \ppHIBS$.
Then, $\R$ invokes $\A$ with the input $\vkPPS$.

\item For a puncturing query on $\prf$, if $\prf \in D$, $\R$ returns $\bot$.
Otherwise $\R$ updates $D \leftarrow D \cup \{\prf\}$.

\item For a signing query on $\msg = \prf||\sff$, if $\prf \in D$, $\R$ returns $\bot$. 
Otherwise $\R$ updates $\bbL^{\Sign} \leftarrow \bbL^{\Sign} \cup \{\msg\}$, makes a signing query on $(\prf, \sff)$ to $\C$, obtains a signature $\sigma$, and returns $\sigma$ to $\A$.

\item For an instruction $\inscorrupt$ from $\A$, $\R$ computes $C \leftarrow \CS(D)$.
Then, for each $\w \in C$, $\R$ makes a corruption query on $\w$ to $\C$ and obtains $\skHIBS_{\w}$.
$\R$ sets $\skPPS_{t} \leftarrow ((\skHIBS_{\w})_{\w \in \CS(D)}, C)$ and sends $\skPPS_{t}$ to $\A$.

\item After receiving the final output $(\msg^{*}= \prf^{*}||\sff^{*}, \sigma^{*})$ from $\A$, $\R$ outputs $(\widetilde{\id}^{*}=\prf^{*}, \widetilde{\msg}^{*}=\sff^{*}, \sigma^{*})$ as the final output.
\end{itemize}

Clearly, $\R$ perfectly simulates the challenger of the EUF-AP-ATP-CMA game.
It remains to show that a valid forgery $(\msg^{*}= \prf^{*}||\sff^{*}, \sigma^{*})$  for the EUF-AP-ATP-CMA game of $\PPS_{\Ours}[\HIBS]$ produced by $\A$ yields a valid forgery for the EUF-AID-CMA game of $\HIBS$.

Suppose that $\A$ outputs a valid forgery $(\msg^{*} = \prf^{*}||\sff^{*}, \sigma^{*})$ for the EUF-AP-ATP-CMA game of $\PPS_{\Ours}[\HIBS]$.
Since $\A$ wins the game, we have $\prf^{*} \in D$ and $\msg^{*} = \prf^{*} || \sff^{*} \notin L^{\Sign}$.
In the corruption phase, $\R$ makes corruption queries on every $\w \in C = \CS(D)$.
By the definition of $\CS(D)$, no leaf in $D$ belongs to a subtree rooted at a node in $C$.
Since $\prf^{*} \in D$, we have $\prf^{*} \notin I[\w]$ for every $\w \in C$.
Hence, the corruption queries made by $\R$ do not violate the winning condition of the EUF-AID-CMA game for the target identity $\prf^{*}$.
Moreover, since $\prf^{*} || \sff^{*} \notin L^{\Sign}$, $\R$ has never made a signing query on $(\prf^{*}, \sff^{*})$ to~$\C$.
This fact implies that $(\prf^{*}, \sff^{*})$ is not contained in the signing-query list of the EUF-AID-CMA game.
Finally, if $\PPSVerify(\vkPPS, \prf^{*}||\sff^{*},\sigma^{*})=1$ holds, then $\HIBSVerify(\ppHIBS, \allowbreak \prf^{*}, \sff^{*}, \sigma^{*}) = 1$ holds.
Therefore, $(\prf^{*},\sff^{*},\sigma^{*})$ is a valid forgery for the EUF-AID-CMA game of $\HIBS$.

Then, we see that
\begin{equation*}
\Pr[\sfGame^{\EUFAIDCMA}_{\HIBS, \R}(1^{\lambda}) \Rightarrow 1]  = \Pr[\sfGame^{\EUFAPATPCMA}_{\PPS_{\Ours}, \A}(1^{\lambda}) \Rightarrow 1]
\end{equation*}
holds.
Thus, we conclude Theorem \ref{Th_HIBS_imply_PPS}.
\qed
\end{proof}

\section*{Acknowledgement}
A part of this work was supported by JSPS KAKENHI JP24H00071, JST CREST JPMJCR2113, and JST K Program JPMJKP24U2.

\bibliographystyle{abbrvurl}
\bibliography{PuncS}

\newpage
\setcounter{tocdepth}{2}
\tableofcontents

\end{document}